\journalname{Quantum Information Processing}
\begin{document}
\title{Two Gilbert-Varshamov Type Existential Bounds for Asymmetric Quantum Error-Correcting Codes\thanks{This note will be published in Quantum Information
  Processing with DOI:\href{http://dx.doi.org/10.1007/s11128-017-1748-y}{10.1007/s11128-017-1748-y}}}
\titlerunning{Gilbert-Varshamov Bounds for Asymmetric Quantum Codes}
\author{Ryutaroh Matsumoto}
\institute{Ryutaroh Matsumoto \at Department of Communication and Computer Engineering, Nagoya
  University, 464-8603 Japan. \url{http://www.rmatsumoto.org/research.html}}
\date{19 October 2017}
\maketitle
\begin{abstract}
  In this note we report two versions of
  Gilbert-Varshamov type existential bounds for
  asymmetric quantum error-correcting codes.
\keywords{asymmetric error \and quantum error correction}
\PACS{03.67.Pp}
\subclass{94B65}
\end{abstract}

\section{Introduction}
Quantum error-correcting codes (QECC) are important for construction
of quantum computers, as the fault-tolerant quantum computation
is based on QECC \cite{chuangnielsen}.
There are two kinds of errors in quantum information,
one is called a bit error and the other is called a phase error.
Steane \cite{steane96aqc} first studied the asymmetry between
probabilities of the bit and the phase errors, and
he also considered QECC for asymmetric quantum errors,
which are called asymmetric quantum error-correcting codes
(AQECC).
Research on AQECC has become very active recently,
see \cite{galindo16,ioffe06,steane96aqc} and the references therein.

On the other hand,
in the study of error-correcting codes,
it is important to know the optimal performance
of codes.
For classical error-correcting codes,
the Gilbert-Varshamov (GV) bound \cite{macwilliams77}
is a sufficient condition for existence of codes
whose parameters satisfies the GV bound.
By the GV bound, one can know that the optimal
performance of classical codes is at least as good as the
GV bound.

For QECC,
Ekert and Macchiavello obtained
a GV type existential bound for general QECCs.
An important subclass of general QECCs is
the stabilizer codes
\cite{calderbank97,calderbank98,gottesman96},
as they enable efficient encoding and decoding.
Calderbank et al.\ \cite{calderbank97}
obtained a GV type existential bound for
the stabilizer QECCs.
After that,
Feng and Ma \cite{feng04} and
Jin and Xing \cite{jin11}
obtained improved versions of
GV type bounds for the stabilizer QECCs.

The Calderbank-Shor-Steane (CSS) QECCs
\cite{calderbank96,steane96}
are an important subclass of the stabilizer
QECCs, as the CSS codes enable more efficient
implementation of the fault-tolerant quantum
computation than the stabilizer codes.

Those existential bounds
\cite{calderbank97,calderbank96,ekert96,feng04,jin11}
did not consider the asymmetric quantum errors,
while the asymmetry in quantum errors
is important in practice
\cite{sarvepalli09}.
As far as the author know,
nobody has reported
existential bounds for the stabilizer or
the CSS QECC for asymmetric quantum errors.
In this note we report such ones.
Our proof arguments are similar to ones
in \cite{calderbank97,calderbank96}.

\section{A GV type existential bound
  for the CSS codes}
An $[[n,k,d_x,d_z]]_q$ QECC encodes
$k$ $q$-ary qudits into $n$ $q$-ary qudits
and detects up to $d_x$ bit errors and
up to $d_z$ phase errors.
It is known \cite{ashikhmin00,calderbank98}
that a nested classical code 
$C_2 \subset C_1 \subset \mathbf{F}_q^n$
with dimensions $k_2$ and $k_1$ 
can construct an $[[n,\dim C_1-\dim C_2]]_q$
CSS code,
where $\mathbf{F}_q$ is a finite field
with $q$ elements.
A quantum error can be expressed
as a pair $(\vec{e}_x$, $\vec{e}_z)$,
where $\vec{e}_x \in \mathbf{F}_q^n$
corresponds to the bit error component of
a quantum error and
$\vec{e}_x \in \mathbf{F}_q^n$
does to the phase error component.

Let $\mathrm{GL}_n(\mathbf{F}_q)$ be the group of $n\times n$ invertible
matrices over $\mathbf{F}_q$.
Let $B_n = \{ (C_1$, $C_2 ) \mid$
$C_2 \subset C_1 \subset \mathbf{F}_q^n$,
$\dim C_1 = k_1$, $\dim C_2 = k_2\}$.
For a nonzero vector $\vec{e} \in\mathbf{F}_q^n$,
let $B_{n,x}(\vec{e})$ (resp.\ $B_{n,z}(\vec{e})$) be the set of
nested code pairs that cannot detect $\vec{e}$ as
a bit error (resp.\ a phase error), that is,
$B_{n,x}(\vec{e}) = \{ (C_1, C_2) \in B_n \mid
\vec{e} \in C_1 \setminus C_2 \}$
(resp.\ $B_{n,z}(\vec{e}) = \{ (C_1, C_2) \in B_n \mid
\vec{e} \in C_2^\perp \setminus C_1^\perp \}$),
where $C_1^\perp$ is the dual code of $C_1$ with respect to
the standard inner product.

\begin{lemma}
For nonzero $\vec{e}$, we have
\begin{eqnarray*}
\sharp B_{n,x}(\vec{e}) &=& \frac{q^{k_1}-q^{k_2}}{q^n-1}\sharp B_{n},\\
\sharp B_{n,z}(\vec{e}) &=& \frac{q^{n-k_2}-q^{n-k_1}}{q^n-1}\sharp B_{n}.
\end{eqnarray*}
\end{lemma}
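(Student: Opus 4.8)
The plan is to exploit the natural action of $\mathrm{GL}_n(\mathbf{F}_q)$ on $\mathbf{F}_q^n$ (and hence on $B_n$) together with a double-counting argument. First I would observe that any $g \in \mathrm{GL}_n(\mathbf{F}_q)$ sends a nested pair $(C_1,C_2) \in B_n$ to $(gC_1, gC_2)$, which again lies in $B_n$ because $g$ preserves both inclusions and dimensions; moreover $\vec{e} \in C_1 \setminus C_2$ if and only if $g\vec{e} \in gC_1 \setminus gC_2$. Hence $g$ restricts to a bijection $B_{n,x}(\vec{e}) \to B_{n,x}(g\vec{e})$. Since $\mathrm{GL}_n(\mathbf{F}_q)$ acts transitively on the nonzero vectors of $\mathbf{F}_q^n$, the cardinality $\sharp B_{n,x}(\vec{e})$ does not depend on the particular nonzero $\vec{e}$.

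Next I would count the set $S = \{ (\vec{e}, (C_1,C_2)) : \vec{e} \neq \vec{0},\ (C_1,C_2) \in B_n,\ \vec{e} \in C_1 \setminus C_2 \}$ in two ways. Summing over $\vec{e}$ first gives $\sharp S = (q^n-1)\,\sharp B_{n,x}(\vec{e})$ by the previous paragraph. Summing over $(C_1,C_2)$ first gives $\sharp S = \sum_{(C_1,C_2)\in B_n} \sharp(C_1 \setminus C_2) = (q^{k_1}-q^{k_2})\,\sharp B_n$, since $\sharp C_1 = q^{k_1}$, $\sharp C_2 = q^{k_2}$, and every vector counted is automatically nonzero because $\vec{0} \in C_2$. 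Equating the two expressions yields the first formula.

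For the second formula I would use the duality bijection $(C_1,C_2)\mapsto (C_2^\perp, C_1^\perp)$, which maps $B_n$ bijectively onto $B_n' := \{(D_1,D_2) : D_2 \subset D_1 \subset \mathbf{F}_q^n,\ \dim D_1 = n-k_2,\ \dim D_2 = n-k_1\}$ (so $\sharp B_n' = \sharp B_n$) and carries $B_{n,z}(\vec{e})$ onto $\{(D_1,D_2) \in B_n' : \vec{e} \in D_1 \setminus D_2\}$. Applying the first part with the dimension pair $(n-k_2, n-k_1)$ in place of $(k_1,k_2)$ then gives $\sharp B_{n,z}(\vec{e}) = \frac{q^{n-k_2}-q^{n-k_1}}{q^n-1}\,\sharp B_n'= \frac{q^{n-k_2}-q^{n-k_1}}{q^n-1}\,\sharp B_n$.

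The only point that requires a little care — and the one I would regard as the crux — is the transitivity/invariance step: one must verify that the $\mathrm{GL}_n(\mathbf{F}_q)$-action genuinely preserves all defining conditions of $B_{n,x}(\vec{e})$ (nestedness, both dimensions, and membership in the set-difference), so that the resulting count is legitimate; once that is in hand, the remainder is bookkeeping. An entirely parallel direct double count, using $\sharp C_2^\perp = q^{n-k_2}$ and $\sharp C_1^\perp = q^{n-k_1}$, could be performed for $B_{n,z}(\vec{e})$ without invoking duality, should one prefer to avoid it.
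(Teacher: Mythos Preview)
Your proposal is correct and follows essentially the same approach as the paper: the paper also uses the $\mathrm{GL}_n(\mathbf{F}_q)$-action to show that $\sharp B_{n,x}(\vec{e})$ is independent of the nonzero $\vec{e}$, combines this with the double count $\sum_{\vec{e}\neq\vec{0}}\sharp B_{n,x}(\vec{e}) = (q^{k_1}-q^{k_2})\sharp B_n$, and then handles the phase-error case by passing to the dual pair $C_2^\perp \supset C_1^\perp$. Your write-up is somewhat more explicit about the duality bijection and the invariance verification, but the underlying argument is the same.
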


\begin{proof}
As each pair $C_2 \subset C_1$ has
$\sharp C_1 \setminus C_2 = q^{k_1}-q^{k_2}$ undetectable errors,
we have
\[
\frac{\sum_{\vec{0}\neq \vec{e}\in \mathbf{F}_q^n} \sharp B_{n,x}(\vec{e})}{\sharp B_{n}}
= q^{k_1}-q^{k_2}.
\]
For nonzero $\vec{e}_1, \vec{e}_2 \in \mathbf{F}_q^n$,
we claim $\sharp B_{n,x}(\vec{e}_1) = \sharp B_{n,x}(\vec{e})_2$.
Assuming the claim, we have
\[
\sum_{\vec{0}\neq \vec{e}\in \mathbf{F}_q^n} \sharp B_{n,x}(\vec{e})
= (q^n-1) \sharp B_{n,x}(\vec{e}).
\]
Combining these two equalities, we have
\[
\sharp B_{n,x}(\vec{e}) = \frac{q^{k_1}-q^{k_2}}{q^n-1}\sharp B_{n}.
\]

We finish the proof by proving the claim.
Let $\vec{e}_1, \vec{e}_2$ be nonzero vectors. We have
\begin{eqnarray*}
  \sharp B_{n,x}(\vec{e}_1) &=& \sharp \{ (C_1, C_2) \in B_n \mid \vec{e}_1 \in C_1 \setminus C_2 \}\\
  &=& \sharp \{ (\tau C_1, \tau C_2)  \mid \tau \in \mathrm{GL}_n(\mathbf{F}_q), \vec{e}_1 \in C_1 \setminus C_2 \}\\
  &=& \sharp \{ (\tau C_1, \tau C_2)  \mid \tau \in \mathrm{GL}_n(\mathbf{F}_q), \tau' \vec{e}_1 \in C_1 \setminus C_2 \}\\
  &=& \sharp \{ ( C_1,  C_2)\in B_n  \mid  \tau' \vec{e}_1 \in C_1 \setminus C_2 \}\\
  &=& \sharp B_{n,x}(\tau'\vec{e}_1),
\end{eqnarray*}
where $\tau' \in \mathrm{GL}_n(\mathbf{F}_q)$ such that $\tau'\vec{e}_1 = \vec{e}_2$.

For phase errors, we can make a similar argument with
$C_2^\perp \supset C_1^\perp$. \qed
\end{proof}

\begin{theorem}\label{th1}
Let $n$, $k_1$, $k_2$, $d_x$ and $d_z$ be positive integers such that
\begin{equation}
\frac{q^{k_1}-q^{k_2}}{q^n-1} \sum_{i=1}^{d_x-1} {n \choose i} (q-1)^i +
\frac{q^{n-k_2}-q^{n-k_1}}{q^n-1}\sum_{i=1}^{d_z-1} {n \choose i} (q-1)^i
< 1, \label{eq1}
\end{equation}
then an
$[[n,k_1-k_2, d_x,d_z]]_q$  CSS QECC exists.
\end{theorem}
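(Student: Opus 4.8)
\emph{Proof idea.} The plan is to run the usual Gilbert--Varshamov averaging argument over the family $B_n$ of nested pairs, using the Lemma to control how many pairs are spoiled by each low-weight error.

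First I would isolate the bad pairs. Let $B$ be the set of $(C_1,C_2)\in B_n$ that lie in $B_{n,x}(\vec e)$ for some nonzero $\vec e$ of Hamming weight at most $d_x-1$, or in $B_{n,z}(\vec e)$ for some nonzero $\vec e$ of Hamming weight at most $d_z-1$; equivalently, $(C_1,C_2)\in B$ iff $C_1\setminus C_2$ contains a nonzero vector of weight $\le d_x-1$ or $C_2^\perp\setminus C_1^\perp$ contains a nonzero vector of weight $\le d_z-1$. A pair in $B_n\setminus B$ has none of these vectors, so by the CSS construction of \cite{ashikhmin00,calderbank98} it yields an $[[n,k_1-k_2,d_x,d_z]]_q$ code; hence it suffices to show $B_n\setminus B\neq\emptyset$.

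Next I would bound $\sharp B$ by a union bound. There are exactly $\binom{n}{i}(q-1)^i$ vectors of weight $i$ in $\mathbf{F}_q^n$, and by the Lemma the quantities $\sharp B_{n,x}(\vec e)$ and $\sharp B_{n,z}(\vec e)$ are independent of the nonzero vector $\vec e$, so
\[
\sharp B\le \frac{q^{k_1}-q^{k_2}}{q^n-1}\Bigl(\sum_{i=1}^{d_x-1}\binom{n}{i}(q-1)^i\Bigr)\sharp B_n+\frac{q^{n-k_2}-q^{n-k_1}}{q^n-1}\Bigl(\sum_{i=1}^{d_z-1}\binom{n}{i}(q-1)^i\Bigr)\sharp B_n .
\]
By hypothesis (\ref{eq1}) the coefficient of $\sharp B_n$ on the right-hand side is strictly less than $1$, and $\sharp B_n>0$ since $k_2\le k_1\le n$; hence $\sharp B<\sharp B_n$, so $B_n\setminus B\neq\emptyset$ and the theorem follows.

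There is essentially no hard step once the Lemma is available: the real content is that the fraction of pairs spoiled by an error depends only on whether the error is of bit or phase type, which is exactly what collapses the count to the two binomial sums. The one point worth making explicit is that a single surviving pair in $B_n\setminus B$ must simultaneously avoid all low-weight bit errors and all low-weight phase errors; this is why $B$ is taken to be the union of both families of offending pairs before its cardinality is compared with $\sharp B_n$. Treating the $d_x$-condition and the $d_z$-condition separately would only produce two codes, one good against bit errors and another good against phase errors, rather than a single code good against both.
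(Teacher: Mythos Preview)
Your argument is correct and is exactly the paper's approach: a union bound over low-weight bit and phase errors using the Lemma, followed by the observation that the hypothesis makes the total number of bad pairs strictly smaller than $\sharp B_n$. The paper states this more tersely, but the content is the same; in particular your remark that one must take the union over both families before comparing with $\sharp B_n$ is precisely the point the paper phrases as ``detectabilities of the bit errors and the phase errors are independent of each other.''
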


\begin{proof}
Recall that each quantum error can be expressed by
its bit error component $\vec{e}_x \in \mathbf{F}_q^n$
and its phase error component $\vec{e}_z \in \mathbf{F}_q^n$.
The bit error component $\vec{e}_x$ cannot be detected by
codes in $B_{n,x}(\vec{e}_x)$ and
the phase error component $\vec{e}_z$ cannot be detected by
codes in $B_{n,z}(\vec{e}_z)$.
The detectabilities of the bit errors and
the phase errors are independent of each other.
Therefore, if Eq.\ (\ref{eq1}) holds
then there exists at least one $(C_1, C_2) \in B_n$ that
can detect all the bit errors with weight up to $d_x-1$ and
all the phase errors with weight up to $d_z-1$, which
implies it is an $[[n,k_1-k_2,  d_x,  d_z]]_q$
quantum code. \qed
\end{proof}

Classical coding theorists often have interest in
asymptotic versions of GV type existential bounds
\cite{macwilliams77}.
They are stated in terms of information rate and
relative distance of classical error-correcting codes.
In the classical error correction, 
information rate is the ratio of the number of information
symbols to the code length, and
relative distance is the ratio of the minimum distance
to the code length.

We can also derive an asymptotic version of
Theorem \ref{th1}.
For an $[[n,k,  d_x,  d_z]]_q$ AQECC,
we may define the relative distance $\delta_x$ for
bit errors as $d_x / n$, and
the relative distance $\delta_z$ for
bit errors as $d_z / n$.
The information rate of an $[[n,k]]_q$ QECC
is defined as $k/n$ \cite{chuangnielsen}.

Recall \cite{macwilliams77} that for $0 \leq \delta \leq 1-1/q$ we have
\begin{equation}
\sum_{i=1}^{\lfloor n\delta \rfloor} {n \choose i} (q-1)^i
\leq q^{nh_q(\delta)}, \label{eq2}
\end{equation}
where $h_q(\delta)=
\delta \log_q(q-1) - \delta \log_q \delta -
(1-\delta)\log_q(1-\delta)$.

\begin{corollary}\label{cor2}
  Let $\delta_x$ and $\delta_z$ be real numbers such that
  $0 \leq \delta_x \leq 1-1/q$ and $0 \leq \delta_z \leq 1-1/q$.
  If
  \begin{eqnarray}
    h_q(\delta_x) &<&1-R_1 ,  \label{eq3}\\
    h_q(\delta_z) & < & R_2, \mbox{ and } \label{eq4}\\
    0 &\leq & R_1-R_2, \nonumber
  \end{eqnarray}
  then, for sufficiently large $n$,
  there exists an $[[n,\lfloor n R_1\rfloor - \lceil nR_2\rceil,\lfloor n\delta_x\rfloor
      ,\lfloor n\delta_z\rfloor]]_q$
  CSS QECC exists.
\end{corollary}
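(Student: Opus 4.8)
The plan is to apply Theorem~\ref{th1} with the specific choice $k_1 = \lfloor nR_1\rfloor$, $k_2 = \lceil nR_2\rceil$, $d_x = \lfloor n\delta_x\rfloor$ and $d_z = \lfloor n\delta_z\rfloor$, and to verify that the numerical hypothesis~(\ref{eq1}) holds once $n$ is large. With this choice the difference of dimensions is exactly $k_1 - k_2 = \lfloor nR_1\rfloor - \lceil nR_2\rceil$, so the code produced has the claimed parameters; the assumption $R_1 \ge R_2$ ensures $k_2 \le k_1$ for all sufficiently large $n$ (with room to spare when $R_1 > R_2$), so that a nested pair $C_2 \subset C_1 \subset \mathbf{F}_q^n$ of the required dimensions is available in the first place.

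To check~(\ref{eq1}) I would bound its left-hand side from above term by term. For the bit-error term, use $q^{k_1} - q^{k_2} \le q^{k_1} \le q^{nR_1}$, then $d_x - 1 \le \lfloor n\delta_x\rfloor$ together with~(\ref{eq2}) to replace the combinatorial sum by $q^{nh_q(\delta_x)}$, and finally $q^n - 1 \ge q^{n-1}$; this bounds the first summand by $q\cdot q^{-n(1 - R_1 - h_q(\delta_x))}$. Symmetrically, $q^{n-k_2} - q^{n-k_1} \le q^{n-k_2} \le q^{n(1-R_2)}$ and~(\ref{eq2}) bound the second summand by $q\cdot q^{-n(R_2 - h_q(\delta_z))}$.

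By the strict inequalities~(\ref{eq3}) and~(\ref{eq4}) the exponents $1 - R_1 - h_q(\delta_x)$ and $R_2 - h_q(\delta_z)$ are both strictly positive, so both upper bounds decay exponentially and in particular each falls below $1/2$ once $n$ is large; hence their sum, which dominates the left-hand side of~(\ref{eq1}), is strictly below $1$, and Theorem~\ref{th1} yields the desired $[[n,\lfloor nR_1\rfloor - \lceil nR_2\rceil,\lfloor n\delta_x\rfloor,\lfloor n\delta_z\rfloor]]_q$ CSS QECC. There is no real obstacle here; the only thing to watch is that the constant prefactor $q$ (from approximating $q^n-1$ by $q^n$) and the floor/ceiling roundings in $k_1,k_2,d_x,d_z$ perturb the exponents only by bounded amounts, which is harmless precisely because~(\ref{eq3})--(\ref{eq4}) are strict and thus give genuine exponential gain. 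This is why the conclusion is only asymptotic, i.e.\ valid for sufficiently large $n$.
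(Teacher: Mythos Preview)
Your proposal is correct and follows essentially the same route as the paper: set $k_1=\lfloor nR_1\rfloor$, $k_2=\lceil nR_2\rceil$, $d_x=\lfloor n\delta_x\rfloor$, $d_z=\lfloor n\delta_z\rfloor$, bound each summand in~(\ref{eq1}) separately using~(\ref{eq2}), and observe that the strict inequalities~(\ref{eq3}) and~(\ref{eq4}) force each bound below $1/2$ for large $n$, so that Theorem~\ref{th1} applies. Your treatment is in fact slightly more explicit than the paper's about the harmless constant prefactor coming from $q^n-1\ge q^{n-1}$ and about the role of the floor/ceiling roundings.
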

In Corollary \ref{cor2}, $R_1$ is the information rate of
classical ECC $C_1$, and $R_2$ is the information rate of
classical ECC $C_2$. The corresponding
quantum CSS code has information rate $R_1-R_2$,
relative distance $\delta_x$ for bit errors, and
relative distance $\delta_z$ for phase errors.

\begin{proof}
  Assume that Eq.\ (\ref{eq3}) holds.
  Then for sufficiently large $n$ we have
  \begin{eqnarray}
    &&    nh_q(\delta_x) < n-nR_1  \nonumber\\
    &\Rightarrow& q^{nh_q(\delta_x)} < (1/2) \frac{q^n}{q^{nR_1}} \nonumber\\
    &\Rightarrow& \frac{q^{nR_1}}{q^n}q^{nh_q(\delta_x)} < 1/2\nonumber\\
    &\Rightarrow& \frac{q^{\lfloor nR_1\rfloor }-q^{\lceil nR_2\rceil }}{q^n-1}
    \sum_{i=1}^{\lfloor n\delta_x\rfloor -1} {n \choose i} (q-1)^i< 1/2.\label{eq5}
  \end{eqnarray}
  Similarly, for sufficiently large $n$ Eq.\ (\ref{eq4}) implies
  \begin{eqnarray}
    &&    nh_q(\delta_z) < nR_2  \nonumber\\
    &\Rightarrow& q^{nh_q(\delta_z)} < (1/2) \frac{q^n}{q^{n(1-R_2)}} \nonumber\\
    &\Rightarrow& \frac{q^{n(1-R_2)}}{q^n}q^{nh_q(\delta_z)} < 1/2\nonumber\\
    &\Rightarrow& \frac{q^{n- \lceil nR_2\rceil }-q^{n-\lfloor nR_1\rfloor }}{q^n-1}
    \sum_{i=1}^{\lfloor n\delta_z\rfloor -1} {n \choose i} (q-1)^i< 1/2.\label{eq6}
  \end{eqnarray}
  Equations (\ref{eq5}) and (\ref{eq6}) imply that the
  assumption of Theorem \ref{th1} becomes true for sufficiently large $n$,
  which  shows Corollary \ref{cor2}. \qed
\end{proof}

\section{A GV type existential bound
  for the stabilizer codes}
Let $C \subset \mathbf{F}_q^{2n}$ be a $\mathbf{F}_q$-linear space of
dimension $n-k$ self-orthogonal with respect to
the standard symplectic inner product in $\mathbf{F}_q^{2n}$.
$C$ can be viewed as an $[[n,k]]_q$
stabilizer QECC.
Let $A_n$ be the set of all such $C$'s.
A nonzero $\vec{e} \in \mathbf{F}_q^{2n}$
can be viewed as a quantum error on $n$ qudits.
Let $A_n(\vec{e})$ be the set of stabilizer codes
that cannot detect $\vec{e}$ as an error, that is,
$A_n(\vec{e}) = \{ C \in A_n \mid
\vec{e} \in C^{\perp\mathrm{s}} \setminus C \}$,
where $C^{\perp\mathrm{s}}$ is the dual of $C$ with respect to the
symplectic inner product.
Then $\sharp A_n(\vec{e}) \leq
\frac{1-q^{-2k}}{1-q^{-2n}}\cdot \frac{1}{q^{n-k}}\sharp A_n$ \cite[Lemma 9]{matsumotouematsu01}.

Recall that, for $C$ to be $[[n,k,d_x,d_z]]_q$,
$C$ must be able to detect all $d_x$ or less bit errors and
all $d_z$ or less phase errors. The number of such errors is
\[
\sum_{i=1}^{d_x-1} {n \choose i}(q-1)^i \times \sum_{i=1}^{d_z-1} {n \choose i}(q-1)^i.
\]
By the same argument as \cite{matsumotouematsu01}[Remark 10] (or as the last section),
we have the following theorem:
\begin{theorem}\label{th3}
Let $n$, $k_1$, $k_2$, $d_x$ and $d_z$ be positive integers such that
\[
\frac{1-q^{-2k}}{1-q^{-2n}}\cdot \frac{1}{q^{n-k}}
\sum_{i=1}^{d_x-1} {n \choose i} (q-1)^i\times \sum_{i=1}^{d_z-1} {n \choose i} (q-1)^i< 1
\]
then there exists an $[[n,k,d_x,d_z]]_q$
stabilizer QECC. \qed
\end{theorem}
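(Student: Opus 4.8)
The plan is to run, over the ensemble $A_n$ of $\mathbf{F}_q$-linear, dimension-$(n-k)$, symplectic self-orthogonal codes in $\mathbf{F}_q^{2n}$, the same counting argument used in the proof of Theorem~\ref{th1} and in \cite[Remark~10]{matsumotouematsu01}.

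First I would identify the set $\mathcal{E}$ of quantum errors that an $[[n,k,d_x,d_z]]_q$ stabilizer code is required to detect: the nonzero $\vec{e}=(\vec{e}_x,\vec{e}_z)\in\mathbf{F}_q^{2n}$ whose bit component $\vec{e}_x$ has Hamming weight at most $d_x-1$ and whose phase component $\vec{e}_z$ has Hamming weight at most $d_z-1$. By the count recalled just before the theorem, $\sharp\mathcal{E}=\sum_{i=1}^{d_x-1}{n \choose i}(q-1)^i\cdot\sum_{i=1}^{d_z-1}{n \choose i}(q-1)^i$. Since a code $C\in A_n$ fails to detect a given $\vec{e}$ precisely when $C\in A_n(\vec{e})$, every $C\in A_n$ that is \emph{not} $[[n,k,d_x,d_z]]_q$ lies in $\bigcup_{\vec{e}\in\mathcal{E}}A_n(\vec{e})$.

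Next I would combine the union bound with the uniform estimate $\sharp A_n(\vec{e})\leq\frac{1-q^{-2k}}{1-q^{-2n}}\cdot\frac{1}{q^{n-k}}\sharp A_n$ of \cite[Lemma~9]{matsumotouematsu01}, which holds for every nonzero $\vec{e}$: the number of such ``bad'' codes is at most $\sharp\mathcal{E}\cdot\frac{1-q^{-2k}}{1-q^{-2n}}\cdot\frac{1}{q^{n-k}}\sharp A_n$. The displayed hypothesis asserts exactly that this quantity is strictly smaller than $\sharp A_n$, so at least one $C\in A_n$ avoids every $A_n(\vec{e})$; this $C$ is a stabilizer code that detects all of $\mathcal{E}$, hence an $[[n,k,d_x,d_z]]_q$ stabilizer QECC.

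Once \cite[Lemma~9]{matsumotouematsu01} is granted there is no hard analytic step; the point I would check carefully is the translation between code parameters and set membership, i.e.\ that the usual symplectic criterion ``$\vec{e}$ is undetectable by the stabilizer code $C$ iff $\vec{e}\in C^{\perp\mathrm{s}}\setminus C$'' applies unchanged to the asymmetric detection requirement, so that $\bigcup_{\vec{e}\in\mathcal{E}}A_n(\vec{e})$ genuinely captures ``not $[[n,k,d_x,d_z]]_q$'', and that $\sharp\mathcal{E}$ factors as the product of the two partial weight-enumerator sums because $\mathrm{wt}(\vec{e}_x)$ and $\mathrm{wt}(\vec{e}_z)$ range independently. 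If one wanted a self-contained write-up rather than the pointer to \cite[Remark~10]{matsumotouematsu01}, the fact that $\sharp A_n(\vec{e})$ is independent of the nonzero $\vec{e}$ could be established exactly as in the Lemma of Section~2, now with $\mathrm{GL}_n(\mathbf{F}_q)$ replaced by $\mathrm{Sp}_{2n}(\mathbf{F}_q)$, which acts transitively on nonzero vectors and preserves both symplectic self-orthogonality and symplectic duality. \qed
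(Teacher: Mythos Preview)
Your proposal is correct and follows exactly the approach the paper indicates: the paper's ``proof'' of Theorem~\ref{th3} is only the sentence ``By the same argument as \cite{matsumotouematsu01}[Remark~10] (or as the last section)'', and what you have written is precisely that argument spelled out---identify the error set, invoke the uniform bound $\sharp A_n(\vec{e})\le \frac{1-q^{-2k}}{1-q^{-2n}}\cdot\frac{1}{q^{n-k}}\sharp A_n$ from \cite[Lemma~9]{matsumotouematsu01}, and apply a union bound over $A_n$. Your closing remark about replacing $\mathrm{GL}_n(\mathbf{F}_q)$ by $\mathrm{Sp}_{2n}(\mathbf{F}_q)$ to make the argument self-contained is a correct elaboration that the paper does not give.
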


By almost the same argument as Corollary \ref{cor2}
we can derive the following asymptotic version of Theorem \ref{th3}.
\begin{corollary}\label{cor4}
  Let $\delta_x$ and $\delta_z$ be real numbers such that
  $0 \leq \delta_x \leq 1-1/q$ and $0 \leq \delta_z \leq 1-1/q$.
  If
  \begin{equation}
    h_q(\delta_x)+h_q(\delta_z) < 1-R \leq 1 ,\label{eq7}
  \end{equation}
  then, for sufficiently large $n$,
  there exists an $[[n, \lfloor nR\rfloor, \lfloor n\delta_x\rfloor, \lfloor n\delta_z \rfloor]]_q$ stabilizer QECC. \qed
\end{corollary}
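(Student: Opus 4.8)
The plan is to imitate the proof of Corollary~\ref{cor2}, now using the finite-length bound of Theorem~\ref{th3} in place of Theorem~\ref{th1}. First I would set $k=\lfloor nR\rfloor$, $d_x=\lfloor n\delta_x\rfloor$ and $d_z=\lfloor n\delta_z\rfloor$, so that the code asserted in the statement is precisely the one Theorem~\ref{th3} produces once its hypothesis has been verified. It therefore suffices to prove that, under Eq.~(\ref{eq7}), the quantity
\[
\frac{1-q^{-2k}}{1-q^{-2n}}\cdot\frac{1}{q^{n-k}}
\sum_{i=1}^{d_x-1}{n\choose i}(q-1)^i \times
\sum_{i=1}^{d_z-1}{n\choose i}(q-1)^i
\]
is strictly less than $1$ for all sufficiently large $n$.

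Next I would bound the three factors individually. Since $0\le\delta_x\le 1-1/q$ and $0\le\delta_z\le 1-1/q$, Eq.~(\ref{eq2}) yields $\sum_{i=1}^{d_x-1}{n\choose i}(q-1)^i\le q^{nh_q(\delta_x)}$ and similarly for $\delta_z$. Because $q\ge 2$ we have $1-q^{-2n}\ge 3/4$, so the leading fraction is at most $2$ for every $n\ge 1$. Finally $1/q^{n-k}=q^{k}/q^{n}\le q^{nR}/q^{n}$ as $k\le nR$. Multiplying the bounds, the displayed quantity is at most $2\,q^{\,n\,(h_q(\delta_x)+h_q(\delta_z)-(1-R))}$. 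Eq.~(\ref{eq7}) makes the exponent negative and bounded away from $0$, so this upper bound tends to $0$ and in particular falls below $1$ for all sufficiently large $n$; Theorem~\ref{th3} then yields the desired $[[n,\lfloor nR\rfloor,\lfloor n\delta_x\rfloor,\lfloor n\delta_z\rfloor]]_q$ stabilizer QECC. If one prefers, the same estimate can be laid out as the chain of implications used in the proof of Corollary~\ref{cor2}, with the factor $2$ absorbed into a constant $1/2$.

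I do not anticipate a real obstacle: this is a routine asymptotic estimate. The only things demanding attention are the bookkeeping with the floors --- each substitution ($k\le nR$, and $\lfloor n\delta\rfloor\le n\delta$ hidden inside Eq.~(\ref{eq2})) must go in the direction that weakens the inequality --- and the degenerate boundary cases. When $R=0$, or when $\delta_x$ or $\delta_z$ equals $0$, one of the two sums is empty, hence the whole left-hand side vanishes and the conclusion is immediate; these can be dismissed in a sentence before the main estimate is run.
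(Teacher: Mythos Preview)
Your argument is correct and is precisely the route the paper intends: it states the corollary with a \qed after remarking that it follows ``by almost the same argument as Corollary~\ref{cor2},'' i.e.\ by bounding the volume sums via Eq.~(\ref{eq2}) and absorbing the remaining constant factor asymptotically, exactly as you do. One cosmetic point: in the boundary case $R=0$ the sums are not what vanish---rather $k=\lfloor nR\rfloor=0$ makes the prefactor $1-q^{-2k}$ equal to $0$---so adjust that sentence, but the conclusion is unaffected.
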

The quantum statilizer code in
Corollary \ref{cor4} has information rate $R$,
relative distance $\delta_x$ for bit errors, and
relative distance $\delta_z$ for phase errors.

By the relation between the CSS and the stabilizer QECCs \cite{calderbank98},
we see that the assumption in Corollary \ref{cor2} is less
demanding than that in Corollary \ref{cor4}
for the same $n$, $R=R_1-R_2$, $\delta_x$ and $\delta_z$,
which means that Corollary \ref{cor4} is a stronger
existential bound than Corollary \ref{cor2}.

\begin{remark}
  Theorems \ref{th1} and \ref{th3}, and Corollaries \ref{cor2} and \ref{cor4}
  do not admit direct comparisons against previously known GV type bounds
  even when $d_x = d_z$. The reason is as follows:
  For a binary QECC to be $[[n,k,2,2]]_2$, it must detect at least
  $n^2$ different errors. On the other hand, for a binary $[[n,k]]_2$
  QECC to detect all single symmetric errors, it only has to detect
  $3n$ errors, which is generally much fewer than $n^2$.
  The above example shows that the number of asymmetric quantum errors
  is much different from that of corresponding symmetric quantum errors,
  even if we assume the same number of bit errors and phase errors in
  asymmetric quantum errors.

  In addition, the famous $[[5,1,3]]_2$ binary stabilizer code
  in \cite{calderbank98,gottesman96} can detect up to four bit
  errors if there is no phase error, and can detect up to
  four phase errors if there is no bit error.
  Thus it is simultaneously both $[[5,1,1,5]]_2$ AQECC and
  $[[5,1,5,1]]_2$ AQECC.
  This phenomenon makes the direct comparison even more difficult.
  
\end{remark}

\begin{acknowledgements}
  The author would like to thank an anonymous reviewer
  for careful reading and the helpful report that improved
  this note, and 
  Prof.\ Fernando Hernando for
  drawing his attention to the asymmetric quantum error correction.
\end{acknowledgements}


\end{document}